\begin{document}
%\title{Graph Partitioning and Scheduling for Distributed Dataflow Computation}
\title{The TensorFlow Partitioning and Scheduling Problem: \\It's the Critical Path!}
%\titlenote{Produces the permission block, and
%  copyright information}
%\subtitle{Extended Abstract}
%\subtitlenote{The full version of the author's guide is available as
%  \texttt{acmart.pdf} document}

\author{Ruben Mayer, Christian Mayer, Larissa Laich}
\affiliation{
	\institution{\{firstname.lastname\}@ipvs.uni-stuttgart.de}
	\institution{Institute of Parallel and Distributed Systems}
	\city{University of Stuttgart, Germany}
}

% The default list of authors is too long for headers}
\renewcommand{\shortauthors}{Ruben Mayer et al.}

\renewcommand{\shorttitle}{The TensorFlow Partitioning and Scheduling Problem: It's the Critical Path!}

\begin{abstract}
State-of-the-art data flow systems such as TensorFlow impose iterative calculations on large graphs that need to be partitioned on heterogeneous devices such as CPUs, GPUs, and TPUs. However, partitioning can not be viewed in isolation. Each device has to select the next graph vertex to be executed, i.e., perform local scheduling decisions. Both problems, partitioning and scheduling, are NP-complete by themselves but have to be solved in combination in order to minimize overall execution time of an iteration. In this paper, we propose several heuristic strategies to solve the partitioning and scheduling problem in TensorFlow. We simulate the performance of the proposed strategies in heterogeneous environments with communication-intensive workloads that are common to TensorFlow. Our findings indicate that the best partitioning and scheduling heuristics are those that focus on minimizing the execution time of the critical path in the graph. Those strategies provide a speed-up of up to 4 times in comparison to strategies that are agnostic to the critical path, such as hash-based partitioning and FIFO scheduling.
\end{abstract}

 \begin{CCSXML}
<ccs2012>
<concept>
<concept_id>10010520.10010521.10010537</concept_id>
<concept_desc>Computer systems organization~Distributed architectures</concept_desc>
<concept_significance>500</concept_significance>
</concept>
<concept>
<concept_id>10010147.10010257</concept_id>
<concept_desc>Computing methodologies~Machine learning</concept_desc>
<concept_significance>100</concept_significance>
</concept>
</ccs2012>
\end{CCSXML}

\ccsdesc[500]{Computer systems organization~Distributed architectures}
\ccsdesc[100]{Computing methodologies~Machine learning}

%
% The code below should be generated by the tool at
% http://dl.acm.org/ccs.cfm
% Please copy and paste the code instead of the example below. 
%
%\begin{CCSXML}
%<ccs2012>
% <concept>
%  <concept_id>10010520.10010553.10010562</concept_id>
%  <concept_desc>Computer systems organization~Embedded systems</concept_desc>
%  <concept_significance>500</concept_significance>
% </concept>
% <concept>
%  <concept_id>10010520.10010575.10010755</concept_id>
%  <concept_desc>Computer systems organization~Redundancy</concept_desc>
%  <concept_significance>300</concept_significance>
% </concept>
% <concept>
%  <concept_id>10010520.10010553.10010554</concept_id>
%  <concept_desc>Computer systems organization~Robotics</concept_desc>
%  <concept_significance>100</concept_significance>
% </concept>
% <concept>
%  <concept_id>10003033.10003083.10003095</concept_id>
%  <concept_desc>Networks~Network reliability</concept_desc>
%  <concept_significance>100</concept_significance>
% </concept>
%</ccs2012>  
%\end{CCSXML}
%
%\ccsdesc[500]{Computer systems organization~Embedded systems}
%\ccsdesc[300]{Computer systems organization~Redundancy}
%\ccsdesc{Computer systems organization~Robotics}
%\ccsdesc[100]{Networks~Network reliability}

% We no longer use \terms command
%\terms{Theory}

\keywords{TensorFlow, partitioning, scheduling, critical path}

\maketitle

\begin{tikzpicture}
\begin{scope}[overlay]
\node[text width=20cm] at ([yshift=-7.0cm]current page.south) {(c) Owner 2017. This is the authors' version of the work. It is posted here for your personal use. Not for redistribution. \newline The definitive version is published in Proceedings of DIDL '17: Workshop on Distributed Infrastructure for Deep Learning, https://doi.org/10.1145/3154842.3154843 .};
\end{scope}
\end{tikzpicture}

\newcommand{\device}{dev} % a sinle device
\newcommand{\devices}{D} % the set of devices
\newcommand{\numDevices}{k} % the set of devices
\newcommand{\capacity}{C} % the memory capacity of a device
\newcommand{\power}{s} % the memory capacity of a device
\newcommand{\bandwidthMatrix}{B} % the memory capacity of a device

\newcommand{\graph}{G} % the DAG graph
\newcommand{\vertices}{V} % the set of vertices
\newcommand{\vertex}{v} % a vertex
\newcommand{\numVertices}{n} % the number of graph vertices
\newcommand{\edges}{E} % the set of edges
\newcommand{\edge}{e} % an edge
\newcommand{\numEdges}{m} % the number of graph edges
\newcommand{\comp}{c} % the computational complexity of graph vertices
\newcommand{\comm}{t} % the amount of communication of graph edges

\newcommand{\commScore}{c_s} % the additional communication score of assigning (vertex, device).

\newcommand{\problem}{\textit{TF}} % the problem abbreviation

\newcommand{\collocationConstraints}{\mathbb{C}} % the collocation constraints in VxV
\newcommand{\deviceConstraints}{\mathbb{D}} % the device constraints in VxD

\newcommand{\scheduleFunc}{f} % the scheduling function
\newcommand{\partFunc}{p} % the partitioning function

\section{Introduction}
\label{sec:introduction}

General-purpose distributed data processing systems for graph-structured data have experienced a phenomenal growth in the last decade. Because of their superior performance and scalability, highly-optimized systems such as Google's TensorFlow \cite{Abadi2016TensorFlow} have created the foundation of the recent breakthroughs in the field of machine learning - focusing on deep learning applications such as image and video classification, natural language processing, and world-class Go playing machines.

Distributed machine learning systems such as TensorFlow express the computation as a directed data flow graph where graph vertices represent computational operations and edges transport data between these operations.
%For instance, TensorFlow expresses the computational task as a directed graph (DG) of operations and edges transporting multi-dimensional arrays (denoted as \textit{tensors}).
This  abstraction empowers the data scientist to harness the power of multi-core infrastructures while expressing arbitrary complex computations.
The major objective is to minimize execution time of the given computational task. Deep learning applications often require the graph to be executed in an iterated fashion with many iterations of supervised learning algorithms on huge data sets. Hence, graphs can be executed thousands of times. For instance, training neural networks can take tens of hours on dozens of devices \cite{dean2012large}. 
Especially the choice of partitioning the graph onto multiple devices and scheduling graph vertices for execution on the devices has huge impact on the performance of the data flow system in terms of execution time. Both problems, i.e., partitioning and scheduling, are NP-complete and, hence, are only feasible if solved in a heuristic fashion.

However, although the problem of partitioning and scheduling is a key challenge to minimize execution time of the data flow graphs, we identified significant lack of research in the following areas. First, there is no formal description of the TensorFlow partitioning and scheduling problem and, as a consequence, no proof about the NP-completeness. Second, existing partitioning heuristics focus on minimizing the amount of communication rather the execution time of the data flow graph, while existing scheduling approaches assume global scheduling decisions only rather than pushing scheduling logic to the devices. %Third, the generality of data flow systems is bought with the burden of choosing suitable partitioning and scheduling heuristics that fit to each possible given problem instance.

The goal of this paper is to conduct a thorough study on a wide spectrum of partitioning and scheduling algorithms as a basis to decide which algorithms to use for distributed deep learning problems. 
In particular, our contributions are as follows. (1) We formulate the TensorFlow partitioning and scheduling problem (denoted as \problem{}) and prove NP-hardness by reducing from the well-known NP-complete single execution time scheduling \cite{Ullman1975NP}. (2) We develop several partitioning and scheduling heuristics specifically tailored to \problem{} that are based on intuitive ideas. (3)  We evaluate the performance of the proposed heuristics on different graphs in a simulation. Results indicate that  the best partitioning and scheduling heuristics are based on minimizing the execution time of the critical path, providing a speed-up of up to 4 times in comparison to strategies that are agnostic to the critical path such as hash-based partitioning and FIFO scheduling.

\section{Problem Formulation}
\label{sec:problemFormulation}

\begin{figure}
	\includegraphics[width=0.45\textwidth]{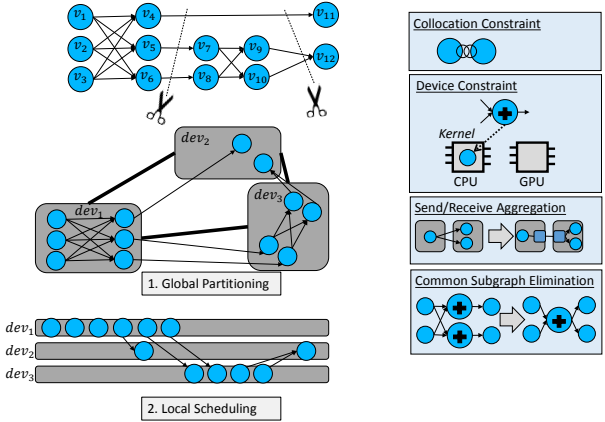}
	\vspace{-5pt}
	\caption{Problem Formulation.}
	\vspace{-10pt}
	\label{fig:problemFormulation}
\end{figure}

In this section, we formulate the TensorFlow partitioning and scheduling problem (abbreviated \problem{}).
 
Let graph $\graph=(\vertices, \edges)$ be the directed, acyclic data flow graph with vertices $\vertices=\{\vertex_1,...,\vertex_{\numVertices}\}$ and edges $\edges=\{\edge_1,...,\edge_{\numEdges}\} \in \vertices \times \vertices$.
Edges transport data from \textit{source to target vertices}, i.e., the output of the source vertex serves as input to the target vertex.
Associated to each edge $\edge_i \in \edges$ is the amount of communication $\comm_i \in \mathbb{R}$. For instance, in TensorFlow an edge is denoted as tensor, i.e., an array of (primitive) data values.
Vertices are denoted as \textit{schedulable} if data on all incoming edges is available. 
Associated to each vertex $\vertex_i \in \vertices$ is its computational complexity $\comp_i \in \mathbb{R}$.

Let $\devices$ be set of devices $\devices{} = \{\device{}_1,...,\device{}_{\numDevices{}}\}$.
Each device $\device_i \in \devices$ has computational speed $\power_i \in \mathbb{R}$.
For example if vertex $\vertex_1$ with computational complexity $\comp_1=10$ (operations) is executed on device $\device_1 \in \devices$ with computational speed $\power_1=10$ (operations per second), then the execution takes $\frac{\comp_1}{\power_1}=1$ second.
Furthermore, device $\device_1 \in \devices$ has maximal memory capacity $\capacity_i \in \mathbb{R}$. For instance, data leaving the source vertex of an edge consumes memory of the device -- that can have more or less severe memory restrictions.
Each two devices are connected via a (physical or virtual) network link.
The bandwidth is given by the bandwidth matrix $\bandwidthMatrix \in \mathbb{R}^{\numDevices \times \numDevices}$.
For instance, devices $\device_1$ and $\device_2$ communicate with bandwidth $\bandwidthMatrix_{1,2}$ bytes per second.

%There are several constraints to the problem.
The set of \textit{collocation constraints} $\collocationConstraints \in \vertices \times \vertices$ encodes the symmetric relation of vertices that have to be placed on the same device.
For instance, stateful operations need to be placed on the same device as their states \cite{Abadi2016TensorFlow}.
Moreover, real-world data flow applications come with implicit or explicit \textit{placement constraints} of computational operations %\cite{Sharma2011Modelling} 
denoted as device constraints $\deviceConstraints \in \vertices \times \devices$. 
Examples are restrictions of the kernel, i.e., the concrete implementation of an operation, or of the hardware, e.g. GPU preferences.

In Figure~\ref{fig:problemFormulation}, we give an example.
The directed acyclic graph (DAG) consists of twelve vertices that are partitioned onto three devices. This partitioning is global, i.e., a logically centralized algorithm selects vertices to be assigned to devices. After the partitioning, each device contains a set of vertices to be executed. There may be several schedulable vertices at each point in time, e.g., device $\device_1$ could schedule vertices $\vertex_1$, $\vertex_2$, or $\vertex_3$.

The partitioning function $\partFunc: \vertices \to \devices$ assigns vertices to devices and the scheduling function $\scheduleFunc: \vertices \to \mathbb{N}$ assigns vertices to time slots in which the vertices are executed.
The goal is to minimize the execution time of the global schedule: 

\vspace{-0.2cm}
\begin{equation}
min_{\scheduleFunc}( max_{\vertex \in \vertices} \scheduleFunc(\vertex)),
\end{equation}

Note that the function $\scheduleFunc$ returns the \textit{starting time} of a vertex execution while we are interested in minimizing the maximal \textit{finishing time}. But we can overcome this problem easily by connecting all vertices in the DAG with out-degree zero (i.e., sink vertices) via a zero cost graph edge to an artificial final sink vertex with computation complexity zero. The starting time of the artificial sink vertex relates to the finishing time of overall computation.
%Now, minimizing the maximal scheduling time over all graph vertices, because when starting execution of the artificial sink vertex all other true graph vertices in the DAG have finished execution.

We have to ensure that the memory constraint is fulfilled (cf. Equation~\ref{eq:memoryConstraint}), i.e., in each point in time $l \in \mathbb{N}$ and for each device $\device_j \in \devices$ the total memory usage for keeping data on all input edges of not yet scheduled vertices does not exceed the maximal capacity $\capacity_j$. We denote the set of active edges on device $\device_j$ at time $l$ as $\edges_{\mathit{active}}(l,j)$.

\vspace{-0.2cm}
\begin{equation}
\label{eq:memoryConstraint}
	\forall \device_j \in \devices, l \in \mathbb{N}: \sum_{\edge_i \in \edges_{\mathit{active}}(l,j)} \comm_i < \capacity_j
\end{equation}

Additionally, we require the collocation constraints to hold, i.e., 

\vspace{-0.2cm}
\begin{equation}
\label{eq:collocotationConstraint}
	\forall \vertex_i, \vertex_j \in \vertices: (\vertex_i, \vertex_j) \in \collocationConstraints \rightarrow \partFunc(\vertex_i)=\partFunc(\vertex_j).
\end{equation}

Finally, the device constraints have to hold, i.e.,
\begin{equation}
\label{eq:deviceConstraint}
\forall \vertex_i \in \vertices, \device_j \in \devices: (\vertex_i, \device_j) \in \deviceConstraints \rightarrow \partFunc(\vertex_i)=\device_j.
\end{equation}

\begin{figure*}
	\includegraphics[width=0.6\textwidth]{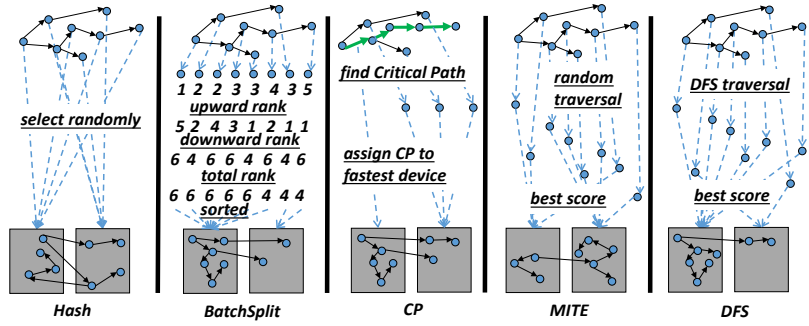}
	\vspace{-10pt}
	\caption{Overview: partitioning strategies.}
	\vspace{-10pt}
	\label{fig:partitioningApproaches}
\end{figure*}

\subsection{NP-completeness}
\begin{theorem}
	\problem{} is NP-complete
\end{theorem}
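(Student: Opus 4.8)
The plan is to establish both membership in NP and NP-hardness, the latter by a reduction from the unit-execution-time precedence-constrained scheduling problem of \cite{Ullman1975NP}. First I would cast \problem{} as a decision problem: given all the data of an instance together with a deadline $D \in \mathbb{N}$, does there exist a partitioning $\partFunc$ and a schedule $\scheduleFunc$ that respect the memory, collocation, and device constraints and satisfy $\max_{\vertex \in \vertices} \scheduleFunc(\vertex) \le D$? Membership in NP is the easy direction: the pair $(\partFunc, \scheduleFunc)$ serves as a certificate of size polynomial in the input, and verifying Equations~\ref{eq:memoryConstraint}--\ref{eq:deviceConstraint}, the communication-induced start-time inequalities along each edge, the per-device non-overlap of execution slots, and the deadline can all be checked in polynomial time.

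For hardness I would start from an instance of unit-time scheduling: a set of $n$ jobs with a partial order of precedence constraints, $m$ identical processors, and a deadline $D$, asking whether all jobs can be finished by time $D$. I map jobs to vertices $\vertices$, the covering relation of the partial order to edges $\edges$, and the $m$ processors to $\numDevices = m$ devices. To strip away every feature of \problem{} with no counterpart in the source problem, I set each $\comp_i = 1$ and each $\power_j = 1$ (so every vertex occupies exactly one time slot), each $\comm_i = 0$ with all bandwidths positive (so communication is free and edges encode only precedence), every capacity $\capacity_j$ large enough that Equation~\ref{eq:memoryConstraint} holds vacuously, and $\collocationConstraints = \deviceConstraints = \emptyset$. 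This transformation is clearly computable in polynomial time.

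Next I would argue the two directions of correctness. Because communication is free, the start-time inequality along an edge $(\vertex_i, \vertex_j)$ collapses to $\scheduleFunc(\vertex_j) \ge \scheduleFunc(\vertex_i) + 1$ irrespective of placement, which is exactly the precedence requirement (and chaining along edges recovers the full partial order); because each device is sequential and every vertex takes one slot, a device runs at most one vertex per time slot, precisely as a processor does. Hence a feasible \problem{} schedule of makespan $\le D$ yields, via $\partFunc$, an assignment of jobs to the $m$ processors meeting the deadline, and conversely any valid unit-time schedule meeting the deadline induces a $(\partFunc, \scheduleFunc)$ of makespan $\le D$ with all disabled constraints trivially satisfied. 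This equivalence, together with membership in NP, gives NP-completeness.

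The step I expect to be most delicate is pinning down the scheduling semantics precisely enough for the equivalence to be airtight: the formulation specifies $\scheduleFunc$ only as a map to start slots, so I must make explicit that a device never runs two vertices in overlapping slots and that a vertex may start only once all its inputs have arrived. Once these implicit feasibility conditions are stated, the reduction is immediate; the remaining care is purely in confirming that setting $\comm_i = 0$ and unbounding the capacities genuinely neutralizes the memory constraint, since the strict inequality in Equation~\ref{eq:memoryConstraint} is satisfied as long as any capacity headroom remains, so that no spurious infeasibility is introduced.
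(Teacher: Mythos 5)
Your proposal is correct and follows essentially the same route as the paper: membership in NP via a polynomially verifiable certificate, and NP-hardness by reducing Ullman's unit-execution-time precedence scheduling \cite{Ullman1975NP} to the decision version of \problem{}, zeroing out communication, memory, collocation, and device constraints and setting unit computation costs and speeds. If anything, you are slightly more careful than the paper on two minor points---keeping bandwidths positive (the paper sets $\bandwidthMatrix_{i,j}=0$, which would make transfer times ill-defined were communication nonzero) and making explicit the implicit non-overlap and data-availability semantics of $\scheduleFunc$---but these are refinements of, not departures from, the published argument.
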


\begin{proof}
	
	First, we reduce the NP-complete single execution time scheduling \cite{Ullman1975NP}, denoted as \textit{Scheduling}, to Decision-\problem{}, i.e., \\$max_{\vertex \in \vertices} (\scheduleFunc(\vertex))<t_{max}$ (Lemma~\ref{lemma1}).
	This is the associated decision problem to \problem{} and therefore is in the same complexity class \cite{Ullman1975NP}. 
	Second, we show that Decision-\problem{} is in NP (Lemma~\ref{lemma2}).
	
	\begin{lemma}
		\label{lemma1}
		\textit{Scheduling} can be reduced to Decision-\problem{} in polynomial time.
	\end{lemma}

	\begin{proof}
		The NP-complete single execution time scheduling is the following problem (cf. \cite{Ullman1975NP}):
		Given a set $S$ of jobs that take unit time on any device, a partial order $\prec$ on the set of jobs, $\numDevices'$ processors, and a time limit $t_{max}$. Is there a scheduling function $g:S \to \{0,...,t_{max}-1\}$ such that the following three properties hold? (i) The scheduling function respects the ordering relation, i.e., $v \in S \prec v' \in S \rightarrow g(v) < g(v')$. (ii) The time limit is not exceeded, i.e., $\forall v \in S: g(v)<t_{max}$. (iii) There are at most $\numDevices'$ active jobs at each point in time, i.e., $\forall i \in \{0,...,t_{max}-1\}: |\{v \in S | g(v)=i\}| \leq \numDevices'$.
	
		We reduce \textit{Scheduling} to Decision-\problem{} using the following method. The graph $\graph=(\vertices, \edges)$ is given by the set of vertices $\vertices=S$ and the set of edges $\edges$, where there is an edge $(\vertex_1,\vertex_2)$ iff the ordering relation defines $\vertex_1 \prec \vertex_2$. We set the number of devices $\numDevices=\numDevices'$. Furthermore, we set $\forall i,j \in \{0,...,|\vertices|\}$ to $\comm_i=\bandwidthMatrix_{i,j}=0$, i.e., we ignore network communication and data to be transported over edges. We set the computational complexity $\forall i,j \in \{0,...,|\vertices|\}$ to $\comp_i=1$ and the device speeds $\forall i \in \{0,...,\numDevices\}$ to $\power_i=1$, such that processing each vertex on each device takes unit time. There are no collocation and device constraints, i.e., $\collocationConstraints=\emptyset$ and $\deviceConstraints=\emptyset$. Clearly, setting these parameters to constants is a polynomial time reduction.
	
		Next, we show: Decision-$\mathit{\problem{}}==\mathit{true} \leftrightarrow  \mathit{Scheduling}==\mathit{true}$.
		\begin{itemize}
			\item[$\rightarrow$] Let Decision-$\mathit{\problem{}}==\mathit{true}$. This directly implies that there exists a scheduling function $\scheduleFunc$ that satisfies the maximal execution time, i.e., $min_{\scheduleFunc}( max_{\vertex \in \vertices} \scheduleFunc(\vertex))<t_{max}$.
			%We can directly take this scheduling function $\scheduleFunc$ from the \problem{} to be the scheduling function $g$ for \textit{Scheduling}, i.e., $g=\scheduleFunc$, because vertices are identical to jobs with unit time.
			The schedule given by the function $\scheduleFunc$ satisfies the three conditions of the \textit{Scheduling} problem: (i) the ordering is respected (the DAG directly encodes the required partial ordering $\prec$ as constructed by the reduction), (ii) the time limit $t_{max}$ is respected as $max_{\vertex \in \vertices} \scheduleFunc(\vertex)<t_{max} \rightarrow \forall v \in S: f(v)<t_{max}$, and (iii) there are at most $k$ active jobs (there are only $k$ devices). Hence, \textit{Scheduling}$==true$.
			\item[$\leftarrow$] Let $\mathit{Scheduling}==\mathit{true}$, i.e., the properties (i), (ii), and (iii) are fulfilled for a scheduling function $g$. Because of (ii) the time limit is kept, i.e., $\forall v \in S: g(v)<t_{max}$, and therefore $min_{\scheduleFunc}( max_{\vertex \in \vertices} \scheduleFunc(\vertex))<t_{max}$. Hence, Decision-\problem{}$==true$.
		\end{itemize}
	\end{proof}
	
	\begin{lemma}
		\label{lemma2}
		Decision-\problem{} is in NP.
	\end{lemma}

	\begin{proof}
		In order to show that a problem is in NP, we have to find an algorithm that decides whether a given problem instance solves Decision-\problem{}. This is achieved by the following two steps.
		%Suppose, a non-deterministic turing machine guesses a problem instance of Decision-\problem{} and we have to check whether the problem instance is a valid one,
		First, we calculate the maximal execution time $max_{\vertex \in \vertices} (\scheduleFunc(\vertex))$ and check whether it is smaller than $t_{max}$. Second, we determine whether the memory, collocation, and device constraints are fulfilled. Clearly, these validation methods can be performed in polynomial time.
	\end{proof}

	As Decision-\problem{} is in NP and the NP-complete \textit{Scheduling} problem can be reduced to Decision-\problem{}, Decision-\problem{} and hence, \problem{}, are NP-complete.
\end{proof}

\subsection{Challenges}
Several specific challenges have to be addressed for TensorFlow Partitioning and Scheduling:

\textbf{Scalability: } Existing scheduling algorithms globally select the processor and the start time for each vertex \cite{heft}. We argue that this might be feasible for task scheduling, where the tasks are relatively coarse-grained computational units. However, for large numbers of fine-grained graph vertices the time needed to calculate the global schedule might be too high. Additionally, the potential for stale schedules increases dramatically if millions of small errors of estimating computational complexity accumulate. Therefore, we propose to first partition the graph using scalable partitioning heuristics and then solve the scheduling problem locally on the machines. This is the de facto standard for TensorFlow partitioning.

\textbf{Heterogeneity: } In TensorFlow, there are heterogeneities on every level: the computational complexity of vertices, the communication volume of edges, the memory capacity of devices, the computational speed of devices, and the bandwidth between devices. All of these values can be highly heterogeneous in real-world deployments as shown in \cite{Abadi2016TensorFlow}.

\section{Partitioning Approaches}
\label{sec:partitioning}
In this section, we describe strategies to partition the graph such that the local scheduling algorithms running on the devices can exploit the locality and idle time is minimal. An overview of the proposed strategies is depicted in Figure \ref{fig:partitioningApproaches}.

\subsection{Hashing}
The simplest strategy is to randomly assign vertices to devices proportionally to the capacity of the devices by using a hash function. Collocation constraints are considered by iteratively merging collocated vertices into a collocation group and assigning this collocation group randomly. Device constraints are fulfilled by restricting the potential set of devices for each vertex to those devices to which assignment is allowed and choosing a random device out of those. Memory requirements are fulfilled in a similar manner by excluding devices with insufficient spare capacity.
Random assignment is very simple and fast and requires minimal space and time complexity leading to good scalability. However, hashing ignores locality of the graph vertices, leading to high communication overhead and slow execution of the DAG.

\subsection{Path-Based Heuristics}
\label{sec:pathHeu}
Instead of randomly assigning vertices to devices, it might be better to place the vertices that are on the \emph{critical path} all on the fastest device(s). The critical path is the path in the dataflow graph that has the longest computation time from source to sink vertex. Speeding up the processing of the critical path, hence, would speed up the overall computation time of the dataflow graph. 

We propose two different strategies that try to optimize the assignment of the critical path. The first strategy, \emph{Batch Split}, estimates the critical path by means of calculating the \emph{rank} of each vertex and then assigns batches of vertices that have the highest ranks to the fastest devices. The second strategy, \emph{Critical Path}, is also based on calculating the vertex ranks, but tries to assign the complete critical path to the fastest device. 

In particular, we define two ranks for each vertex, the \emph{upward rank} and the \emph{downward rank}. 
The upward rank of vertex $\vertex_i$ is  the summed computational complexity of vertices along the longest path from $\vertex_i$ to any sink vertex and defined as: 

\vspace{-0.4cm}
\begin{equation}
\label{eq:upwardRank}
\mathit{upRank}(v_i) = \mathtt{max}_{v_j \in \mathit{succ}(v_i)}(\mathit{upRank}(v_j) + c_i)
\end{equation}

The downward rank of $\vertex_i$ is the summed computational complexity of predecessor vertices along the longest path from any source vertex to $\vertex_i$ and defined as: 

\vspace{-0.5cm}
\begin{equation}
\label{eq:downwardRank}
\mathit{downRank}(v_i) = \mathtt{max}_{v_j \in \mathit{pred}(v_i)}(\mathit{downRank}(v_j) + c_i)
\end{equation}

\subsubsection{Batch Split}
%In contrast to above path-based partitioning methods, rank-based partitioning does not batch the partitioning decisions but performs them separately for each vertex. 

The idea of the Batch Split strategy is to avoid expensive critical path computation but still prioritize optimal placement of vertices that are located \emph{on long paths}.

First, we calculate the \emph{total rank} for each vertex $\vertex$ that is the sum of the upward rank  and the downward rank. The total rank can be calculated efficiently by traversing the whole graph similar to Dijkstra's algorithm and calculate upward and downward ranks \textit{on the way}. 
Then, we sort vertices by total rank in descending order and assign them independently to the fastest feasible device. 
Due to the sorting of vertices, the complexity of the Batch Split strategy is in $\mathcal{O}(log(\numVertices) \times \numVertices)$ (the calculation of total rank can be performed in linear runtime).

%During the RangeBasedPartitioning the graph's set of nodes is divided into different ranges. For each node the number of operations till this node from a source node as well as the number of operations from this node to a sink node is calculated. The sum of these two values is used to sort the nodes in a descending order. The nodes of the critical path which should not be delayed are very likely all in the highest range. The nodes are assigned to the devices (also sorted based on their speed) with an assignment of the nodes of the highest range to the fastest device. Of course, this is not possible if a device is not feasible so each node is assigned to the next feasible devices which could be found (similar to the procedure in hashing).
%The complexity of this algorithm is $O(log(n)\times n)$
%
%Improvements:
%
%Only assign devices out of highest value to fastest device -> start new ranges earlier.

\subsubsection{Critical Path}
The Critical Path (CP) strategy tries to achie-ve minimal execution time by assigning the complete critical path to the fastest device. With this strategy, no additional communication latency is added to the computation time of the critical path.

Hence, we first compute the critical path based on downward ranks, which works as follows. (1) The algorithm starts at the source vertices and computes for each vertex of the complete graph the downward rank. (2) Choose the sink vertex with the maximum downward rank. This vertex is on the critical path. (3) Follow the predecessor relation from the chosen sink vertex to the source, adding the visited vertices to the critical path. In case a vertex has multiple predecessors, follow all such paths. (4) When reaching a source vertex, a path is completed. Choose the longest path between the chosen sink vertex and any of the connected sink vertices in the predecessor relation of the sink vertex. This is the critical path.

The critical path is assigned to the fastest feasible device. If no device can hold the complete critical path, it is divided among the fastest devices.  
%The other vertices are assigned to the feasible devices, taking into account the sum of execution times of already assigned vertices on each device. 
A vertex $\vertex_k$ that is not on the critical path is assigned to the device $\device_i$ with minimal sum of execution times of already assigned vertices plus the execution time of $\vertex_k$  on $\device_i$ (cf. Equation \ref{eq:cp}).

\vspace{-0.2cm}
\begin{equation}
\label{eq:cp}
\min_{\mathit{\device_i \in \devices}} \bigg( \Big( \sum_{\vertex_j \in V : p(\vertex_j) = \mathit{\device_i}} \frac{c_j}{\power_i} \Big) + \frac{c_k}{\power_i} \bigg)
\end{equation}

%sum of $\Sigma^{\mathit{exec}}_{\mathit{\device_i}}$ and the execution time of the vertex on that device.

\subsection{Multi-Objective Heuristics}

Instead of only taking into account the critical path, we identify four objectives in assigning vertex $\vertex$ to device $\device$: (i) introduce minimal communication overhead, (ii) prefer fast devices, (iii) prefer devices with high memory capacity, and (iv) prefer vertices on critical paths to be assigned first. We propose two different heuristics that take into account that broader set of optimization objectives. %The first heuristic, MITE, . The second heuristic, Depth First Search, ... .

\subsubsection{MITE}
\label{sec:mite}

The idea of the MITE (Memory, Importance, Traffic, and Execution time) strategy is to consider all four optimization objectives and use a heuristic function to assign the vertices to devices. A vertex $\vertex_i$ is assigned to the device $\device_l$ such that the following score is \emph{minimized}:

\vspace{-0.4cm}
\begin{equation}
\label{eq:mite}
\begin{split}
\mathit{mite}(\vertex_i, \device_l) = & \mathit{mem}(\device_l)\ \times \\
& \mathit{imp}(\vertex_i, \device_l) \times \\
& \mathit{traffic}(\vertex_i, \device_l)\ \times \\
& \mathit{execTime}(\vertex_i, \device_l)
\end{split}
\end{equation}

The memory score is the percentage of memory utilization of $\device_l$. Based on how many vertices are already assigned to $\device_l$ by the partitioning algorithm, the memory utilization will grow and less loaded devices will be favored by the partitioning algorithm.

The importance score favors important vertices to be placed on fast devices. An important vertex is a vertex with a high total rank (total rank is defined in Section \ref{sec:pathHeu} as the sum of upward and downward rank of a vertex). In the score computation, the total rank of a vertex is normalized by the highest total rank among all vertices, and the speed of a device is normalized by the speed of the fastest device. Formally, the importance score is defined as:

\vspace{-0.2cm}
\begin{equation}
\label{eq:importance}
\mathit{imp}(\vertex_i, \device_l) = 1 - (\frac{\mathit{totalRank}(\vertex_i)}{\mathtt{max}_{\vertex_j \in V}\{\mathit{totalRank}(\vertex_j)\}}\ \times\ \frac{\power_l}{\mathtt{max}_{\device_k \in D}\{ \power_k \}})
\end{equation}

The traffic score is computed as follows. If a vertex $\vertex$ is placed on device $\device$ and a neighboring vertex $\vertex' \neq \vertex$ is already assigned to another device $\device' \neq \device$, additional traffic is introduced. We define the traffic score as follows:

\vspace{-0.2cm}
\begin{equation}
\label{eq:communication}
\mathit{traffic}(\vertex_i, \device_l) = \sum_{\vertex_j \in \vertices, (\vertex_j,\vertex_i) \in \edges} \frac{\comm_i}{\bandwidthMatrix_{p(\vertex_j),l}}
\end{equation}

The execution time score is the predicted execution time of $\vertex_i$ on $\device_l$ as defined in Section \ref{sec:problemFormulation}, normalized by the maximum execution time of  $\vertex_i$ on any device. %That means, that $\mathit{execTime}(\vertex_i, \device_l) = 1$ on the slowest device. 

\subsubsection{Depth First Search}

The idea of Depth First Search (DFS) partitioning is to traverse the graph with a DFS algorithm and assign visited vertices using a multi-objective heuristic function. In particular, the DFS algorithm starts at the source vertex with the highest total rank (as defined in Section \ref{sec:pathHeu}). Whenever a vertex is visited, it is assigned to the device that minimizes the function

\vspace{-0.2cm}
\begin{equation}
\label{eq:dfs}
\mathit{dfsScore}(\vertex_i, \device_l) = \mathit{traffic}(\vertex_i, \device_l)\ \times\ \mathit{execTime}(\vertex_i, \device_l)
\end{equation}

with traffic and execution time score as defined in the MITE heuristics in Section \ref{sec:mite}.

\section{Scheduling}
\label{sec:scheduling}

In this section, we describe the scheduling algorithm that runs on each device (device is a broad term that can also refer to, e.g., a CPU core). We assume that a partitioning algorithm has already performed the partitioning decision, such that to device, a part of the graph is assigned. Now, the scheduling algorithm decides the execution order of the graph vertices on that device.

In particular, the function $\scheduleFunc: \vertices \to \mathbb{N}$ assigns the device's vertices to numbers which define the order of execution. The following criteria are set by the scheduling problem:
(1) Each vertex is executed exactly once in a scheduling cycle. (2) A device can execute at most one vertex simultaneously.
(3) Scheduling of a vertex is non-preemptive, i.e., the running execution of a vertex on a device cannot be interrupted.
(4) A vertex $v$ can only be executed when all tensors of the predecessors in the graph are computed and transferred to the device of $v$. When all ingoing tensors are available, the vertex is denoted \emph{executable}.
(5) A device can only schedule vertices that are assigned to it by the partitioning algorithm. A vertex cannot be assigned to multiple devices.
(6) A device has to remain idle when there are no executable vertices assigned to it.

Besides the classical non-preemptive scheduling algorithms such as FIFO, we have devised two scheduling algorithms tailored to the TensorFlow problem. 

\emph{Highest Path Computation Time first} (PCT) scheduling prefers the execution of the vertex whose longest path of direct and indirect successors takes most computation time. The rationale behind this strategy is to minimize the execution time of the \emph{critical path} on the graph. 

\emph{Maximum Successor Rank first} (MSR) scheduling tries to minimize idleness across all devices, by taking into account how many downstream vertices \emph{on the global graph} depend on a vertex (i.e., the rank) and rewards vertices whose execution activate downstream devices that are currently idle. The rationale behind this strategy is to maximize the \emph{resource utilization} of the devices.

\subsection{PCT Scheduling}

PCT schedules always the executable vertex whose longest path of direct and indirect successors takes most computation time. The upwards path computation time (PCT) is calculated from the sink vertices as shown in Equation~\ref{eq:pct}. The computation time of a sink vertex $v$ is the execution time of $v$ on its assigned device.  If $v$ is not a sink vertex, the maximal transfer time of a tensor to any of the successor vertices as well as the path computation time of that successor vertex are added. The transfer time is considered 0 if both vertices are placed on the same device. Else, it is the edge weight (size of the tensor) divided by the the transfer rate between the two devices.

\vspace{-0.2cm}
\begin{equation}
\label{eq:pct}
\mathit{PCT(v_i)} = \mathtt{max}_{v_j \in \mathit{succ}(v_i)} \Big(\mathit{PCT(v_j)} + \mathit{trans(v_j, v_i)}\Big) + \frac{\mathit{c_i}}{\mathit{\power_{\partFunc(v_i)}}}
\end{equation}

PCT is calculated once for each vertex, after the partitioning is decided. The calculated schedule is used for every iteration of the TensorFlow program.

\subsection{MSR Scheduling}
\label{sec:MSR}

The drawback of PCT scheduling is that the interdependency of vertices on different devices are not taken into account. However, it could be beneficial to schedule vertices on one device $a$ first that are necessary to activate vertices on another device $b$ in order to avoid $b$'s idleness. 

Our Maximum Successor Rank First (MSR) scheduling algorithm takes into account such dependencies. Each vertex $v$ on a device is scored not only based on the number of successor vertices, but also on the number of successor \emph{devices} (i.e., on how many devices the successor vertices of $v$ are distributed), and on the number of successor devices that are idle and get \emph{activated} when the vertex has been executed. The idea is to minimize idleness over all devices, such that throughput will be maximized.

The MSR scoring function is specified as follows:

\begin{equation}
\label{eq:msr}
\begin{split}
\mathit{MSR(\vertex_i)} = \sum_{\vertex_j \in \mathit{succ}(\vertex_i)} & \Big(\ \alpha * 1 \\
& + \beta * \big(\partFunc(\vertex_j) \neq \partFunc(\vertex_i)\big)  \\ 
& + \gamma * (\lvert \mathit{pred}(\vertex_j) \rvert = 1)\\
& + \delta * \big(\mathit{idle}(\partFunc(\vertex_j)) \land ( \lvert \mathit{pred(v_j)} \rvert = 1 ) \big)\ \Big)
\end{split}
\end{equation}

The weights $\alpha$, $\beta$, $\gamma$ and $\delta$ can be used to balance the scoring such that specific goals are emphasized. E.g., a relatively high value of $\delta$ favors the scheduling of vertices that lead to the activation of other idle devices.

%\subsection{HEFT}
%static scheduling approach
%The Heterogeneous Earliest Finish Time (HEFT) algorithm \cite{heft} solves the related \textit{task scheduling problem} but can not be used directly for the TensorFlow partitioning and scheduling problem.
%It iteratively assigns vertices to devices and consists of two phases: the task prioritizing phase and the processor selection phase.
%In the task prioritizing phase, the vertices are ranked due to their upward rank, i.e., the computation and communications costs from the given vertex to the last vertex in the DAG (\textit{sink vertex}).
%Next, the tasks are sorted in descending order with respect to their upward rank.
%In the processor selection phase, HEFT iteratively assigns the top-most task to a processor while calculating the scheduling time on the processor considering previous task assignments. HEFT tries to insert the task into idle time-slots where between the expected finish-time of a task and the expected start time of the next task already scheduled on the processor.

\section{Evaluations}
\label{sec:evaluations}

To evaluate the proposed partitioning and scheduling strategies, we employ an event-based simulation. This allows us to evaluate large real-world networks on a large simulated number of devices.

%The goal of the evaluations is to compare the different partitioning and scheduling strategies to each other to get an idea about which of them are promising. We leave the implementation of the promising strategies in TensorFlow and the evaluation on real hardware (CPUs, GPUs, and potentially also on TPUs) to the future work.

\begin{table}
\centering
\caption{Properties of the evaluated neural networks.}
\vspace{-0.2cm}
\label{tab:syntheticSampleGraph}
\resizebox{0.45\textwidth}{!}{%
\begin{tabular}{|c|c|c|c|c|}
\hline
Graph & \#nodes & \begin{tabular}[c]{@{}c@{}}\#edges\end{tabular} & \begin{tabular}[c]{@{}c@{}}average\\ node degree\end{tabular} & \begin{tabular}[c]{@{}c@{}}\#colocated \\nodes\end{tabular} \\ \hline
convolutional\_network     & 347   & 531                      & 1.53      & 104                            \\ \hline
recurrent\_network     & 3,069   & 5,533            & 1.8        & 533                             \\ \hline
dynamic\_rnn     & 5,271    & 9,214                     & 1.75                      & 1,356                     \\ \hline
\end{tabular}%
}
\vspace{-0.2cm}
\end{table}

\begin{figure*}
%\vspace{-10pt}
	\includegraphics[width=0.56\textwidth]{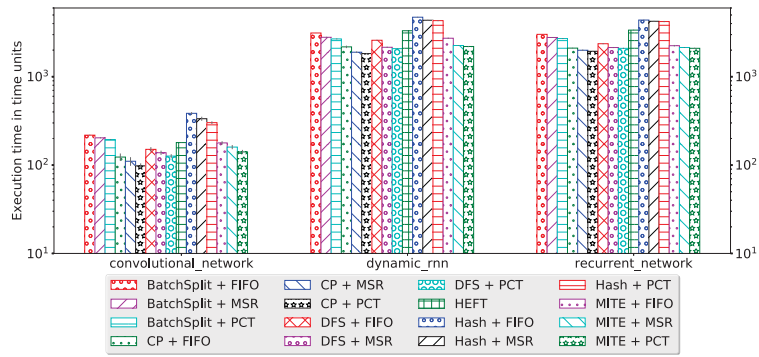}
	\vspace{-10pt}
	\caption{Execution time of 1 iteration at different partitioning and scheduling strategies on 50 devices.}
	\vspace{-10pt}
	\label{fig:evalExecPart}
\end{figure*}

\subsection{Experimental setup}

\textbf{Simulation Parameters.}
 In the evaluations, we simulated 50 devices. The simulated devices had assigned a speed in a random range of 10 to 100 operations per time unit. The transfer rate between the devices is set to a random number in the range between 10 and 60 bytes per time unit.  Tensor sizes were again randomly set between 1 and 100 bytes. The number of operations to compute a vertex function was randomly set between 1 and 100 operations. The numbers are based on real tensor sizes and on devices available from Amazon AWS.

\textbf{TensorFlow Networks.}
Three real TensorFlow networks were extracted from the TensorFlow examples on Github\footnote{\url{https://github.com/aymericdamien/TensorFlow-Examples/tree/master/examples/3_NeuralNetworks}}.
The properties of these networks are listed in Table \ref{tab:syntheticSampleGraph}. The co-location constraints were directly taken from the TensorFlow networks.

\textbf{Baselines.}
The Heterogeneous Earliest Finish Time (HEFT) algorithm \cite{heft} solves the related \textit{task scheduling problem} but can not be used directly for the TensorFlow partitioning and scheduling problem.
We modified the HEFT algorithm slightly in the processor selection phase in order to ensure satisfaction of the TensorFlow constraints.
First, we consider only \textit{feasible devices} for a vertex $\vertex$, i.e., devices to which collocated vertices are already assigned and that satisfy the device constraints of vertex $\vertex$. 
In this phase the earliest finish time of a vertex for each \textbf{feasible} device is computed and assigned to the device with the lowest earliest finish time.
If the vertex is collocated, all collocated vertices are assigned but as their predecessor vertices might be not assigned yet the earliest finish time can not be computed. So their computing time slot is only computed and added to the devices time schedule when it is their turn due to the task prioritization list.

For scheduling, a second baseline we consider is FIFO scheduling, i.e., the vertices are scheduled according to the time they become executable. If two vertices become executable at the same time, the FIFO scheduler chooses randomly which of them to schedule.

\subsection{Experiments and Results}
Six different partitioning strategies were executed in combination with three different scheduling strategies on the three networks  on 50 simulated devices. For MSR scheduling, the weights were set to $\alpha = 1, \beta = 1, \gamma = 1, \delta = 5$. Some of the strategies are non-deterministic as, e.g., the order of vertices being assigned to devices might differ. 

The results of the simulation runs are depicted in Figure \ref{fig:evalExecPart}. The visualized execution time is the average of 10 executions and the standard mean deviation is shown as a gray line on each bar.  The results show that both partitioning and scheduling have a significant impact on the TensorFlow performance. The best strategy (CP partitioning + PCT scheduling) is up to 4 times as fast as the worst strategy (i.e., Hash partitioning + FIFO scheduling). This was the case throughout all tested TensorFlow networks.

The reason for the bad performance of Hash partitioning and FIFO scheduling is that those strategies are agnostic to the characteristics of the \problem{} problem. Hash partitioning leads to good load balancing, but this is not most important in reducing the data flow computation time from source to sink. FIFO scheduling does not favor the fast execution of the critical path either. It becomes clear that the focus of both partitioning and scheduling strategies on reducing the computation time of the critical path is of immense importance.

\section{Related Work}
\label{sec:relatedWork}

%The HEFT algorithm  \cite{heft}  solves the partitioning and scheduling problem in one pass (globally). The algorithmic complexity is in $\mathbb{O}(\numVertices^2 \times p)$. However, HEFT is not scalable for large graphs. First, calculating the specific time slot for each vertex is computationally extremely hard. Second, execution time of each vertex is subject to small variance. Hence, precomputations of execution time slots become increasingly imprecise (accumulating errors) which significantly hurts scheduling performance. Therefore, in our paper, we solve the partitioning and scheduling problem in two phases: first partition the graph using logically centralized algorithms, second local scheduling strategies on the devices of schedulable vertices.

Graph partitioning has drawn a lot of attention as a preprocessing step in large-scale graph processing \cite{7536511, Verma:2017:ECP:3055540.3055543}. However, graph partitioning strategies for graph processing focus on minimizing the traffic between devices and keeping the load balanced. In contrast to that, the \problem{} problem deals with a data flow from source vertices to sink vertices. Hence, partitioning strategies that are successful for graph processing problems may not generalize to the \problem{} problem.

The HEFT \cite{heft} scheduling algorithm evaluated in this paper has shown good performance in comparison to 20 other heuristic scheduling strategies for data flow execution \cite{Canon2008}. However, such scheduling algorithms place vertices (or tasks) on devices (or processors) dynamically at run-time. We do not consider a relocation of vertices at run-time from one device to another as a promising way to tackle the \problem{} problem. Instead, the more intuitive approach is to partition the graph first and then perform the scheduling locally on the devices. 
\section{Conclusion}
\label{sec:conclusion}

In this paper, we have formally defined the TensorFlow partitioning and scheduling problem, proven the NP-completeness of the problem, and proposed a number of heuristics that solve the problem. The evaluation results based on a simulation indicate that those partitioning and scheduling strategies yield the best results that focus on the minimization of the execution time of the critical path, outperforming other strategies by up to a factor of 4. 
%A grand challenge of employing multi-objective heuristics to solve \problem{} is how to weigh the different objectives into an overall score. Overall, this is still an open question.

In the future work, we plan to implement the proposed strategies in the TensorFlow framework to verify the simulation results. In particular, we aim to extend the tested deep learning networks to other types and larger scales, to see whether the findings generalize. %We believe that the TensorFlow partitioning and scheduling problem is still in its infancy, and a lot more work needs to be done in that direction to support the proliferation of distributed deep learning.

\bibliographystyle{ACM-Reference-Format}
\bibliography{bib/bib} 

%%% -*-BibTeX-*-
%%% Do NOT edit. File created by BibTeX with style
%%% ACM-Reference-Format-Journals [18-Jan-2012].

\begin{thebibliography}{00}

%%% ====================================================================
%%% NOTE TO THE USER: you can override these defaults by providing
%%% customized versions of any of these macros before the \bibliography
%%% command.  Each of them MUST provide its own final punctuation,
%%% except for \shownote{}, \showDOI{}, and \showURL{}.  The latter two
%%% do not use final punctuation, in order to avoid confusing it with
%%% the Web address.
%%%
%%% To suppress output of a particular field, define its macro to expand
%%% to an empty string, or better, \unskip, like this:
%%%
%%% \newcommand{\showDOI}[1]{\unskip}   % LaTeX syntax
%%%
%%% \def \showDOI #1{\unskip}           % plain TeX syntax
%%%
%%% ====================================================================

\ifx \showCODEN    \undefined \def \showCODEN     #1{\unskip}     \fi
\ifx \showDOI      \undefined \def \showDOI       #1{{\tt DOI:}\penalty0{#1}\ }
  \fi
\ifx \showISBNx    \undefined \def \showISBNx     #1{\unskip}     \fi
\ifx \showISBNxiii \undefined \def \showISBNxiii  #1{\unskip}     \fi
\ifx \showISSN     \undefined \def \showISSN      #1{\unskip}     \fi
\ifx \showLCCN     \undefined \def \showLCCN      #1{\unskip}     \fi
\ifx \shownote     \undefined \def \shownote      #1{#1}          \fi
\ifx \showarticletitle \undefined \def \showarticletitle #1{#1}   \fi
\ifx \showURL      \undefined \def \showURL       #1{#1}          \fi
% The following commands are used for tagged output and should be
% invisible to TeX
\providecommand\bibfield[2]{#2}
\providecommand\bibinfo[2]{#2}
\providecommand\natexlab[1]{#1}
\providecommand\showeprint[2][]{arXiv:#2}

\bibitem[\protect\citeauthoryear{Abadi, Barham, Chen, Chen, Davis, Dean, Devin,
  Ghemawat, Irving, Isard, et~al\mbox{.}}{Abadi et~al\mbox{.}}{2016}]%
        {Abadi2016TensorFlow}
\bibfield{author}{\bibinfo{person}{Mart{\'\i}n Abadi}, \bibinfo{person}{Paul
  Barham}, \bibinfo{person}{Jianmin Chen}, \bibinfo{person}{Zhifeng Chen},
  \bibinfo{person}{Andy Davis}, \bibinfo{person}{Jeffrey Dean},
  \bibinfo{person}{Matthieu Devin}, \bibinfo{person}{Sanjay Ghemawat},
  \bibinfo{person}{Geoffrey Irving}, \bibinfo{person}{Michael Isard}, {and}
  \bibinfo{person}{others}.} \bibinfo{year}{2016}\natexlab{}.
\newblock \showarticletitle{TensorFlow: A system for large-scale machine
  learning}. In \bibinfo{booktitle}{{\em Proceedings of the 12th USENIX
  Symposium on Operating Systems Design and Implementation (OSDI). Savannah,
  Georgia, USA}}.
\newblock


\bibitem[\protect\citeauthoryear{Canon, Jeannot, Sakellariou, and Zheng}{Canon
  et~al\mbox{.}}{2008}]%
        {Canon2008}
\bibfield{author}{\bibinfo{person}{Louis-Claude Canon},
  \bibinfo{person}{Emmanuel Jeannot}, \bibinfo{person}{Rizos Sakellariou},
  {and} \bibinfo{person}{Wei Zheng}.} \bibinfo{year}{2008}\natexlab{}.
\newblock \bibinfo{booktitle}{{\em Comparative Evaluation Of The Robustness Of
  DAG Scheduling Heuristics}}.
\newblock \bibinfo{publisher}{Springer US}, \bibinfo{address}{Boston, MA},
  \bibinfo{pages}{73--84}.
\newblock
\showISBNx{978-0-387-09457-1}


\bibitem[\protect\citeauthoryear{Dean, Corrado, Monga, Chen, Devin, Mao,
  Senior, Tucker, Yang, Le, et~al\mbox{.}}{Dean et~al\mbox{.}}{2012}]%
        {dean2012large}
\bibfield{author}{\bibinfo{person}{Jeffrey Dean}, \bibinfo{person}{Greg
  Corrado}, \bibinfo{person}{Rajat Monga}, \bibinfo{person}{Kai Chen},
  \bibinfo{person}{Matthieu Devin}, \bibinfo{person}{Mark Mao},
  \bibinfo{person}{Andrew Senior}, \bibinfo{person}{Paul Tucker},
  \bibinfo{person}{Ke Yang}, \bibinfo{person}{Quoc~V Le}, {and}
  \bibinfo{person}{others}.} \bibinfo{year}{2012}\natexlab{}.
\newblock \showarticletitle{Large scale distributed deep networks}. In
  \bibinfo{booktitle}{{\em Advances in neural information processing systems}}.
  \bibinfo{pages}{1223--1231}.
\newblock


\bibitem[\protect\citeauthoryear{Mayer, Tariq, Li, and Rothermel}{Mayer
  et~al\mbox{.}}{2016}]%
        {7536511}
\bibfield{author}{\bibinfo{person}{C. Mayer}, \bibinfo{person}{M.~A. Tariq},
  \bibinfo{person}{C. Li}, {and} \bibinfo{person}{K. Rothermel}.}
  \bibinfo{year}{2016}\natexlab{}.
\newblock \showarticletitle{GrapH: Heterogeneity-Aware Graph Computation with
  Adaptive Partitioning}. In \bibinfo{booktitle}{{\em 2016 IEEE 36th
  International Conference on Distributed Computing Systems (ICDCS)}}.
  \bibinfo{pages}{118--128}.
\newblock


\bibitem[\protect\citeauthoryear{Topcuoglu, Hariri, and Wu}{Topcuoglu
  et~al\mbox{.}}{2002}]%
        {heft}
\bibfield{author}{\bibinfo{person}{H. Topcuoglu}, \bibinfo{person}{S. Hariri},
  {and} \bibinfo{person}{Min-You Wu}.} \bibinfo{year}{2002}\natexlab{}.
\newblock \showarticletitle{Performance-effective and low-complexity task
  scheduling for heterogeneous computing}.
\newblock \bibinfo{journal}{{\em IEEE Transactions on Parallel and Distributed
  Systems\/}} \bibinfo{volume}{13}, \bibinfo{number}{3} (\bibinfo{year}{2002}),
  \bibinfo{pages}{260--274}.
\newblock


\bibitem[\protect\citeauthoryear{Ullman}{Ullman}{1975}]%
        {Ullman1975NP}
\bibfield{author}{\bibinfo{person}{J.D. Ullman}.}
  \bibinfo{year}{1975}\natexlab{}.
\newblock \showarticletitle{NP-complete scheduling problems}.
\newblock \bibinfo{journal}{{\it J. Comput. System Sci.}} \bibinfo{volume}{10},
  \bibinfo{number}{3} (\bibinfo{year}{1975}), \bibinfo{pages}{384 -- 393}.
\newblock
\showISSN{0022-0000}


\bibitem[\protect\citeauthoryear{Verma, Leslie, Shin, and Gupta}{Verma
  et~al\mbox{.}}{2017}]%
        {Verma:2017:ECP:3055540.3055543}
\bibfield{author}{\bibinfo{person}{Shiv Verma}, \bibinfo{person}{Luke~M.
  Leslie}, \bibinfo{person}{Yosub Shin}, {and} \bibinfo{person}{Indranil
  Gupta}.} \bibinfo{year}{2017}\natexlab{}.
\newblock \showarticletitle{An Experimental Comparison of Partitioning
  Strategies in Distributed Graph Processing}.
\newblock \bibinfo{journal}{{\em Proc. VLDB Endow.\/}} \bibinfo{volume}{10},
  \bibinfo{number}{5} (\bibinfo{date}{Jan.} \bibinfo{year}{2017}),
  \bibinfo{pages}{493--504}.
\newblock
\showISSN{2150-8097}


\end{thebibliography}

\end{document}